\documentclass[runningheads]{llncs}

\usepackage[T1]{fontenc}

\usepackage{hyperref}
\usepackage{color}

\usepackage{amsmath}
\usepackage{amssymb}
\usepackage{xcolor}

\newcommand{\R}{{\mathbb R}}
\newcommand{\Rnn}{{\mathbb R}_{\ge 0}}
\newcommand{\B}{\{0, 1\}}

\newcommand{\gt}{\mathrm{GT}}
\newcommand{\gte}{\mathrm{GTE}}
\newcommand{\lt}{\mathrm{LT}}
\newcommand{\lte}{\mathrm{LTE}}
\newcommand{\eq}{\mathrm{EQ}}
\newcommand{\neqg}{\mathrm{NEQ}}
\newcommand{\id}{\mathrm{id}}
\newcommand{\swap}{\mathrm{SWAP}}
\newcommand{\nw}{\mathrm{NW}}
\newcommand{\snw}{\mathrm{SNW}}
\newcommand{\perm}{\mathrm{PERM}}
\newcommand{\ce}{\mathrm{CE}}
\newcommand{\odd}{\mathrm{ODD}}
\newcommand{\even}{\mathrm{EVEN}}
\newcommand{\hamm}{\mathrm{H}}
\newcommand{\tresh}{\mathrm{T}}
\newcommand{\match}{\mathrm{MATCH}}

\begin{document}

\title{Succinct QUBO formulations for permutation problems by sorting networks}

\titlerunning{Succinct QUBOs for permutations}

\author{
Katalin Friedl\inst{1} \and
Levente Gegő\inst{1} \and
László Kabódi\inst{1} \and
Viktória Nemkin\inst{1}}

\authorrunning{K. Friedl et al.}

\institute{Department of Computer Science and Information Theory\\Budapest University of Technology and Economics\\
\email{\{friedl, lgego, kabodil, nemkin\}@cs.bme.hu}}

\maketitle

\begin{abstract}
Quadratic Unconstrained Binary Optimization (QUBO) is a standard NP-hard optimization problem. Recently, it has gained renewed interest through quantum computing, as QUBOs directly reduce to the Ising model, on which quantum annealing devices are based.

We introduce a QUBO formulation for permutations using compare-exchange networks, with only $O(n \log^2 n)$ binary variables. This is a substantial improvement over the standard permutation matrix encoding, which requires $n^2$ variables and has a much denser interaction graph. A central feature of our approach is uniformity: each permutation corresponds to a unique variable assignment, enabling unbiased sampling.

Our construction also allows additional constraints, including fixed points and parity. Moreover, it provides a representation of permutations that supports the operations multiplication and inversion, and also makes it possible to check the order of a permutation.

This can be used to uniformly generate permutations of a given order or, for example, permutations that commute with a specified permutation.

To our knowledge, this is the first result linking oblivious compare-exchange networks with QUBO encodings. While similar functionality can be achieved using permutation matrices, our method yields QUBOs that are both smaller and sparser. We expect this method to be practically useful in areas where unbiased sampling of constrained permutations is important, including cryptography and combinatorial design.

\keywords{QUBO \and sorting networks \and permutations.}
\end{abstract}

\section{Introduction}
\label{sec-intro}

Permutations are a fundamental concept in computer science, a wide range of important problems can be viewed as finding a permutation that satisfies a prescribed set of constraints, including various tasks in logistics, planning, and scheduling. Generating random permutations is a basic building block of randomized algorithms and also comes up in probability theory, in card shuffling problems. Its complexity was investigated in different environments: in the 1960s a linear time algorithm to uniformly sample permutations was given in~\cite{durstenfeld-permu-gen}. Later, the parallel version became interesting. Here, shuffling networks of logarithmic depth have been considered to generate almost random permutations~\cite{czumaj-permu-switching} and the problem has also been studied in details in the RAM model~\cite{torben-permu-ram} as well as recently in the cell-probe model~\cite{alekseev-permu-cell} and quantum circuits~\cite{bibhas-permu-quantum}. Furthermore, there is another line of research on the mixing time of different shuffling methods. The simple Thorp shuffle is good enough to produce nearly
uniform permutations~\cite{ben-thorp},~\cite{ben-thorp-improved}.

Quadratic unconstrained binary optimization (QUBO) has been dubbed as a unified framework for modeling and solving combinatorial optimization problems, because many such problems can be naturally described in this form see, for example~\cite{kochenberger-glover-unified-2004},~\cite{lucas-qubo-np-2014},~\cite{punnen-qubo-2022} for an introduction to this area. Formally, QUBO is defined as minimizing a quadratic objective function over the Boolean hypercube,
\[
\min_{x \in \B^n} \sum_{i,j} q_{i,j} x_i x_j.
\]

In the context of QUBO, permutations are typically encoded using $\Theta(n^2)$ qubits and $\Theta(n)$ interactions per-qubit, via one-hot encoding, also known as permutation matrix encoding~\cite{mayowa-permu-weights},~\cite{gbgl-permu-tech}. Recently, domain-wall encoding has been proposed in~\cite{codognet-domain-wall} as an alternative, which halves the numer of per-qubits interactions. An easy to see lower bound on the number of qubits necessary to encode permutations is $\Omega(n \log n)$. This has been shown to be achieveable in a different setting, with higher-order terms allowed (HOBO) by \cite{cimbi-hobo} using the factorial number system.

Instead of nearly uniformly, our goal is to generate permutations uniformly, which we achieved by using compare-exchange networks. In this paper, we introduce a QUBO formulation for permutations using compare-exchange networks with only $O(n \log^2 n)$ binary variables and $O(\log n)$ interactions per-qubit. Our main contribution is the general idea to connect oblivious algorithms to QUBO (and in general, mathematical programming) formulations: since these algorithms follow a fixed sequence of operations regardless of the input, their correct behavior can be described by a mathematical formula. The main idea is to describe the relation consisting of the correct (input, output) pairs of the network as a quadratic optimization problem. Beyond that, the formulations themselves are easily obtained. We show (Theorem~\ref{thm-network}) that this can be done with $O(k(n+m))$ variables, where the network has $n$ input numbers of $k$ bits each and $m$ is the number of gates in the network. 

We use this framework to describe some constraints about permutations, including parity, order, elements of a conjugacy class, involutions, derangements, and fixed points. Constraints like these are relevant in fields such as cryptography, in S-box design~\cite{crypto-aes}, or combinatorial designs, where the Latin Square Completion problem and its variants serve as a foundation for applications like sports tournament scheduling~\cite{nemhauser-sport-scheduling}. Moverover, we address the problem of Permutation Pattern Matching~\cite{bose-ppm} via stable sorting networks. 

With the assumption that a (classical or quantum) solver samples the solutions uniformly, our construction provides a way to uniformly sample permutations in general, or with the previous constraints.

A notable limitation of our framework is its difficulty in modeling graph-theoretic problems such as the Traveling Salesperson Problem (TSP): in these cases, representing vertices as binary strings makes edge verification a non-trivial task.

\section{Preliminaries}
\label{sec-prelim}
We recall basic notions for formulating QUBO problems. It is well known that every Boolean function $f : \B^k \to \B$ admits a unique multilinear polynomial $g : \R^k \to \R$ such that $f(x)=g(x)$ for all $x \in \B^k$. However, since $g$ is generally not quadratic, we consider a relaxed representation, that will not be unique. Here, the satisfying assignments of $f$ correspond to the roots of a non-negative multilinear polynomial $g$ with additional variables.

\begin{definition}[Polynomial representation]
\label{def-represent}
Let $f : \B^k \to \B$ be a Boolean function. The non-negative multilinear polynomial $g: \R^{k+m} \to \Rnn$ \emph{represents} $f$ when
\begin{equation*}
\min_{z \in \B^m} g(x,z) ~
\begin{cases}
~=0 & \text{ if }~ f(x) = 1 \\
~>0 & \text{ if }~ f(x) = 0
\end{cases}
\end{equation*}
holds for all $x \in \B^k$. The $z$ are called the auxiliary variables.

A relation is represented by the polynomial $g$ if $g$ represents the indicator function of the relation.
\end{definition}

From now on, we refer to functions like $g$ simply as \emph{polynomials} and by Boolean function or relation, we usually mean a polynomial representing it.

\begin{definition}[Uniformity]
A polynomial $g(x,z)$ is \emph{uniform} if, for every $x$ satisfying $f(x)=1$, the number of $z$ assignments where $g(x,z)=0$ is independent of $x$.
\end{definition}

In the context of this paper, this number is typically $1$. With the assumption that a (classical or quantum) solver finds the roots of $g$ uniformly, this provides a way to uniformly sample satisfying assignments of $f$.

\subsection{Quadratization}

It is known that every Boolean function can be represented by a quadratic polynomial. Such a polynomial can be obtained by reducing any higher-degree representation to a quadratic form, while preserving the underlying Boolean function. This reduction can be achieved, for example, by introducing auxiliary variables using the following observations. See, for example~\cite{boros-quadrat}, or for a more extensive review~\cite{dattani-quadrat}.

A higher degree negative monomial $-\prod_{i=1}^t x_i$, where $x_i \in \B$ can be replaced by introducing one auxiliary variable $z \in \B$ with the quadratic term 
\begin{equation}
\label{eq-quad-neg}
z(2t-1-2\sum_{i=1}^t x_i)
\end{equation}
because its minimum for fixed $x_i$ equals to $-\prod_{i=1}^t x_i$. Since this minimum is taken uniquely when $z=\prod_{i=1}^t x_i$, this transformation preserves uniformity.

The degree of a higher degree positive monomial can be decreased by replacing a subproduct $x_i x_j$ with a new variable $z$.
To guarantee that $z=x_i x_j$, a quadratic penalty term
\begin{equation}
\label{eq-quad-pos}
\alpha(x_i x_j - z(2 x_i + 2 x_j -3))
\end{equation} 
is added to the polynomial. This term is $0$ if $z = x_i x_j$ holds and is at least $1$ otherwise.

The penalty weight $\alpha$ is set to be strictly larger than the coefficient of the original monomial, so that the incorrect assignment of $z$ results in a strictly larger function value. It follows, that this transformation preserves uniformity.

Eq. (\ref{eq-quad-pos}) can also be applied to negative monomials, in this case $\alpha$ is set to be larger than the absolute value of the coefficient.

Notice that the resulting $g$ is not an exact quadratization of $f$. When $f(x)>0$, the minimum of $g(x,z)$ over $z$ need not equal to the value of $f(x)$. For tasks where the goal is to find a zero assignment of $f$, such a quadratization is sufficient.

\subsection{Basic QUBOs}

For completeness, we list here uniform quadratic polynomial representations of some basic Boolean functions. Note that, for simplicity, the auxiliary variables can be omitted from the parameter lists.

Let $x \in \B^n$ and $k \in \{0, \dots, n\}$. The polynomial for the Hamming-weight function is
\begin{equation}
\label{eq-hamming}
\hamm_k(x) = \left(\sum_{i=1}^n x_i-k\right)^2,
\end{equation}
which is $0$ when the sum of $x$ is $k$ and positive otherwise.

Let $t=\lfloor \log(n+1) \rfloor$ and auxiliary variables $y \in \B^t$. Define the auxiliary integer variable 
\begin{equation*}
Y = \sum_{i=1}^t 2^{i-1} y_i,
\end{equation*}
its range includes $\{0, \dots, n\}$. The polynomial for the treshold function is
\begin{equation}
\label{eq-treshold}
\tresh_k(x) = \left(\sum_{i=1}^n x_i -k-Y\right)^2,
\end{equation}
for which the minimum over $Y$ is $0$ when the sum of $x$ is at least $k$ and positive otherwise.

The polynomials for the even and odd functions are
\begin{align}
\label{eq-odd}
\even(x) =& \left(\sum_{i=1}^n x_i - (Y-y_1) \right)^2\\
\label{eq-even}
\odd(x)  =& \left(\sum_{i=1}^n x_i - (Y-y_1+1) \right)^2.
\end{align}
The minimum of these polynomials over $Y$ is $0$ when the parity of the sum of $x$ is correct and positive otherwise.

Moreover, since each value of $Y$ corresponds to a unique assignment of the bits $y$, all of the polynomials above are uniform representations using $O(\log n)$ auxiliary ariables.

\subsection{Permutation matrix encoding}
The well-known approach to encode a permutation $\pi$ of $\{1, \dots, n\}$ is to introduce variables $p_{i,j}\in\B$ for $i,j \in \{1, \dots, n\}$, such that $p_{i,j}=1 \Leftrightarrow \pi(i)=j$. These must form a permutation matrix, which is enforced by the constraints
\begin{equation*}
\sum_{j=1}^n p_{i,j} = 1 \quad (1 \leq i \leq n) \quad \text{and} \quad
\sum_{i=1}^n p_{i,j} = 1 \quad (1 \leq j \leq n).
\end{equation*}
These can be described by (\ref{eq-hamming}) as
\begin{equation*}
\left(\sum_{j=1}^n p_{i,j} - 1\right)^2 \quad (1 \leq i \leq n) \quad \text{and} \quad
\left(\sum_{i=1}^n p_{i,j} = 1\right)^2 \quad (1 \leq j \leq n).
\end{equation*}

This uses exactly $n^2$ binary variables. Moreover, each variable $p_{i,j}$ interacts with $O(n)$ other variables: after expanding the expressions above, it appears in $O(n)$ quadratic terms, paired with the remaining variables of row $i$ and coumn $j$.

\section{Sorting network formulation}

Compare-exchange networks \cite{knuth-1998-ce}, also known as sorting networks, are a well-known algorithm family that can sort $n$ numbers in an oblivious way, i.e., the sequence of operations does not depend on the actual input, only on its size. Such a network consists of $n$ parallel lines that initially carry the input numbers. Gates act on pairs of lines and can modify the values they carry. In our formulation, variables represent the input and output of a gate with some additional control variables.

A sorting network uses \textit{compare-exchange gate}s. This gate compares the two numbers on its inputs and outputs them in the correct order. These gates can be decomposed into a \textit{greater than gate} and a \textit{controlled swap gate}.

A \textit{greater than gate} takes two input numbers $(x, y)$ and outputs a single comparison bit $c$, where $c = 1$ if $x > y$, and $c = 0$ otherwise.

A \textit{controlled swap gate} takes two input numbers $(x, y)$ and an input bit $c$ for control. If $c = 0$, the outputs are $(x, y)$ unchanged; if $c = 1$, the outputs are swapped $(y, x)$.

A \textit{compare-exchange gate} combines these: it takes two input numbers $(x, y)$, compares them, and if $x > y$, it swaps them; otherwise, it leaves them unchanged. Thus its output is always the pair $(x, y)$ sorted in non-decreasing order.

The correct behavior of a gate can be described by a relation on its variables. For this relation, a polynomial representation is given that is quadratized. Later, these \emph{gate-polynomials} are combined to represent the entire network. Their value will be zero if and only if all gates ``behave properly'', each gate-polynomial representing the relation of its gate.

\subsection{Polynomial representation for controlled swap gates}

Let $x_1, x_2$ correspond to the input numbers of the controlled swap gate, $y_1, y_2$ to the output numbers and $c$ to the control bit.

\begin{definition}[Swap relation] \label{def-swap}
Assume the variables $x_1, x_2, y_1, y_2$ can hold elements from $\B^k$ and $c$ from $\B$. The $\swap$ relation consists of 
the tuples $(x_1, x_2, y_1, y_2, c)$ where $x_1=y_1, x_2=y_2, c=0$ or $x_2=y_1, x_1=y_2, c=1$ holds.
\end{definition}

\begin{proposition}[Swap polynomial] \label{prop-swap}
There is a uniform quadratic polynomial representation of the $\swap$ relation, $\swap(x_1, x_2, y_1, y_2, c)$ on these $4k+1$ variables that uses $2k$ auxiliary variables.
\end{proposition}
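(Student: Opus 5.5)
Work bitwise. Write $x_1=(x_{1,1},\dots,x_{1,k})$ and similarly for $x_2,y_1,y_2$. The relation holds iff for every bit position $i$
\[
y_{1,i} = (1-c)x_{1,i} + c\,x_{2,i}, \qquad y_{2,i} = (1-c)x_{2,i} + c\,x_{1,i}.
\]
So I would build the polynomial as a sum of $k$ independent bit-gadgets that share only $c$. The gadget for position $i$ will be a non-negative quadratic $S_i$ using \emph{two} auxiliary bits. Summing them gives $2k$ auxiliaries in total, which matches the statement.

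A natural choice of auxiliaries is $a_i = c\,x_{1,i}$ and $b_i = c\,x_{2,i}$, each enforced by the penalty (\ref{eq-quad-pos}). The swap then becomes linear: $y_{1,i} = x_{1,i} - a_i + b_i$ and $y_{2,i} = x_{2,i} - b_i + a_i$. I would take
\[
S_i = \alpha\bigl(c x_{1,i} - a_i(2c+2x_{1,i}-3)\bigr) + \alpha\bigl(c x_{2,i} - b_i(2c+2x_{2,i}-3)\bigr) + (y_{1,i}-x_{1,i}+a_i-b_i)^2 + (y_{2,i}-x_{2,i}+b_i-a_i)^2 ,
\]
with, say, $\alpha=1$. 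Every summand is non-negative on $\B$ inputs: the penalty of (\ref{eq-quad-pos}) is $0$ when $z=x_ix_j$ and at least $1$ otherwise, and the squares are obviously non-negative. Since all summands are non-negative, no weight balancing is needed.

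I would then check three things.
\begin{itemize}
\item \textbf{Correctness.} $S_i=0$ forces $a_i=cx_{1,i}$ and $b_i=cx_{2,i}$, and then forces $y_{1,i},y_{2,i}$ to equal the swapped or unswapped bits. Conversely, if $(x_1,x_2,y_1,y_2,c)$ is in the relation, this choice of $a_i,b_i$ gives $0$. Summing over $i$, the total polynomial vanishes iff the tuple is in $\swap$ (Definition~\ref{def-swap}). Otherwise it is at least $1$ for every choice of auxiliaries.
\item \textbf{Uniformity.} For each valid tuple the zero is attained only at $a_i=cx_{1,i}$, $b_i=cx_{2,i}$. That is exactly one auxiliary assignment, as in the uniformity discussion for (\ref{eq-quad-pos}).
\item \textbf{Degree.} The squared linear forms are quadratic, and multilinearity follows by replacing $v^2$ with $v$ for Boolean $v$.
\end{itemize}

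The main obstacle is mostly bookkeeping. I need to confirm that each squared term is multilinear after reduction, and that the count is exactly $2k$ auxiliaries, with no extra variable for $c$ or for the products $(1-c)x$. The linear expression of $y$ through $a_i,b_i$ is what avoids those extra variables.

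An alternative gadget would use one auxiliary per bit, such as $d_i=c(x_{2,i}-x_{1,i})$. It is not Boolean, so I would stay with two Boolean products per bit.
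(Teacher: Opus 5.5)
Your proposal is correct and follows the same approach as the paper's proof. Both build bitwise gadgets sharing the control bit $c$, introduce auxiliaries for $c\,x_{1,i}$ and $c\,x_{2,i}$ via (\ref{eq-quad-pos}) so the selector becomes linear inside the squared differences, and obtain $2k$ auxiliaries with uniformity coming from the uniqueness of these product variables. Your explicit observation that $\alpha=1$ suffices, because the squared terms stay non-negative even at wrong auxiliary values, is a valid detail the paper leaves implicit.
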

\begin{proof}
First consider the bit case, i.e., when $k=1$.
The quadratic polynomial $g_s(x_1, x_2, c) = (1-c) x_1 + c x_2$ selects one of $x_1$ or $x_2$, depending on the value of $c$.
Using this, the relation of the controlled swap for $k=1$ can be represented by the polynomial
\begin{equation*}
(y_1 - g_s(x_1, x_2, c))^2 + (y_2 - g_s(x_2, x_1, c))^2.
\end{equation*}
To make it 0, both terms have to be 0, so depending on the value of $c$, either $y_1=x_1$ and $y_2=x_2$, or  $y_1=x_2$ and $y_2=x_1$ must hold.
This can be quadratized, for example, by introducing auxiliary variables for $c x_1$ and $c x_2$ as in (\ref{eq-quad-pos}). With these substitutions, the polynomial $g_s$ becomes linear, and the resulting polynomial, together with the added penalty terms called $\swap_1$ is quadratic.

When $k > 1$, the controlled swap can be performed as $k$ controlled swaps on the individual bits, each of them using the same control bit.
Therefore, a quadratization $\swap(x_1, x_2, y_1, y_2, c)$ for the general controlled swap can be obtained by summing up $\swap_1$ applied to the 1st, 2nd, $...$, $k$th bit of the parameters $x_1, x_2, y_1, y_2$ using the same control bit $c$ in each case.

The 1-bit case uses 2 auxiliary variables, so the $k$-bit version uses $2k$ of them. Since (\ref{eq-quad-pos}) preserves uniformity, this representation is uniform.
\end{proof}

\subsection{Polynomial representation for comparison gates}

Let $x, y$ correspond to the input numbers of the greater than gate and $c$ to the output comparison bit. Here $c=1$ means that $x>y$ and $c=0$ means that $x \leq y$.

\begin{definition}[GT relation] \label{def-gt}
Assume that the variables $x, y$ can hold elements from $\B^k$ and $c$ from $\B$. The $\gt$ relation consists of the triplets $(x, y, c)$ where $x \le y, ~ c=0$, or $x > y, ~ c=1$ holds. 
\end{definition}

For bits $x, y, c \in \B$, consider the polynomial
\begin{equation}
\label{eq-comp-h}
h(x, y, c) = 1+(1-2c) (x - y).
\end{equation}
Notice that $h$ is non-negative and can only be $0$ when $x \neq y$, more specifically when $x<y$ and $c=0$ or $x>y$ and $c=1$. This is used in the formulation below.

\begin{proposition}[GT polynomial] \label{prop-gt}
There is a uniform quadratic polynomial representation of the $\gt$ relation, $\gt(x, y, c)$ on these $2k+1$ variables that uses $3k+1$ auxiliary variables.
\end{proposition}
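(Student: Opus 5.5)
The plan is to compare $x=(x_k,\dots,x_1)$ and $y=(y_k,\dots,y_1)$ lexicographically, with $x_k,y_k$ the most significant bits. We locate the most significant position where they differ with a one-hot auxiliary vector, and use the bit polynomial $h$ of (\ref{eq-comp-h}) at that position to force the value of $c$. Concretely, we introduce auxiliary bits $d_0,d_1,\dots,d_k$, which account for $k+1$ variables. The intended meaning is that $d_i=1$ for $i\ge 1$ iff $i$ is the highest index with $x_i\neq y_i$, and $d_0=1$ iff $x=y$. The one-hot condition is imposed by the Hamming polynomial $\hamm_1(d)$ from (\ref{eq-hamming}). For each $j$ let $E_j=\sum_{i=0}^{j-1} d_i$, which is linear in $d$. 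Under one-hot $d$, $E_j=1$ exactly when $j$ lies strictly above the chosen position, and those are the positions where we need $x_j=y_j$.

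The unquadratized polynomial is
\[
\hamm_1(d) + \sum_{j=1}^k E_j\,(x_j-y_j)^2 + \sum_{i=1}^k d_i\, h(x_i,y_i,c) + d_0\, c .
\]
All terms are non-negative on Boolean inputs. Fix $x,y$ and assume $d$ is one-hot with $d_i=1$. Then the second sum vanishes iff $x_j=y_j$ for all $j>i$. If $i\ge1$, the third sum vanishes iff $x_i\neq y_i$ and $c$ equals the correct comparison bit at position $i$; by the remark after (\ref{eq-comp-h}), this means $c=1$ iff $x_i>y_i$. If $i=0$, all bits agree, and the last term forces $c=0$. So the polynomial is zero iff $d$ marks the true first differing position (or equality) and $c$ is the $\gt$ value. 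Since $d$ is then uniquely determined by $(x,y)$, zeros are unique in $d$, and the polynomial represents $\gt$.

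Next we quadratize. The product $E_j(x_j-y_j)^2$ is cubic because $(x_j-y_j)^2=x_j+y_j-2x_jy_j$. We introduce $t_j=x_jy_j$ via (\ref{eq-quad-pos}), which makes the factor linear and the product quadratic; this costs $k$ variables. Expanding $d_i h(x_i,y_i,c)=d_i(1+x_i-y_i)-2d_ic(x_i-y_i)$, the cubic part is $d_ic(x_i-y_i)$. We introduce $w_i=d_ic$ via (\ref{eq-quad-pos}), which turns it into the quadratic $w_i(x_i-y_i)$; this costs another $k$ variables. The total is $(k+1)+k+k=3k+1$ auxiliary variables, matching the statement. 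Each $t_j$ and $w_i$ is uniquely forced, so uniformity is preserved as argued after (\ref{eq-quad-pos}).

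The main obstacle is the penalty weights. After substitution the polynomial is no longer a sum of non-negative pieces when $t$ or $w$ take wrong values; for instance $-2w_i(x_i-y_i)$ can go negative. I would first set every penalty coefficient $\alpha$ larger than the sum of absolute values of the coefficients of all monomials that contain the substituted product, which is $O(k)$. I would then check that a wrong $t_j$ or $w_i$ raises the penalty by at least $\alpha$, while lowering the rest of the polynomial by less than $\alpha$. This gives minimum $0$ exactly on the correct assignments and a strictly positive value otherwise.
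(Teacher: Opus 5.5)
Your proposal is correct and is essentially the paper's proof: your term $\sum_{j=1}^k E_j (x_j-y_j)^2$ is the paper's $\sum_{i=0}^k p_i g_i(x,y)$ with the order of summation exchanged, and you quadratize with the same auxiliaries ($x_jy_j$ and $d_i c$), giving $(k+1)+2k=3k+1$. Your explicit choice of $\alpha$ larger than the total absolute coefficient of all monomials that share a substituted product (about $2k$ for $x_jy_j$, which is reused for every $i<j$) is more careful than the paper, which only says to add ``the appropriate penalty functions.''
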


\begin{proof}
Let the bits of the two parameters be $x = x^k  \ldots x^2 x^1$ and  $y = y^k  \ldots y^2 y^1$, where $x^k$ and $y^k$ are the most significant bits. To compare these numbers, we first look for the highest differing bit, if it exists. To do this, we introduce $k+1$ auxiliary variables $\{p_0, p_1, \dots, p_k\}$ each $p_i \in \B$.

If the numbers $x$ and $y$ are the same, let $p_0=1$ and $c=0$. Otherwise if the highest differing bit is the $i$th one let $p_i=1$, and when $x^i < y^i$ let $c=0$ while when $x^i > y^i$ let $c=1$. This is described by the following conditions.

\smallskip
(i) Exactly one of the $p_i$ is $1$, by the penalty
\begin{equation*}
H_1(p) = \left(\sum_{i=0}^{k} p_i-1 \right)^2
\end{equation*}
from (\ref{eq-hamming}).

\smallskip
(ii) If $p_i=1$ then we need that $x^k = y^k, \ldots, x^{i+1} = y^{i+1}$. Defining the function
\begin{equation*}
g_i(x,y)=\sum_{j=i+1}^{k} x^j + y^j - 2 x^j y^j,
\end{equation*}
which counts the number of differing bits between the $k$th and the $(i+1)$th, the penalty is
\begin{equation*}
 \sum_{i=0}^k p_i g_i(x, y).
\end{equation*}

\smallskip
(iii) If $p_i=1$ and $0<i$, then a comparison of the $i$th bits is needed, by the penalty
\begin{equation*}
\sum_{i=1}^{k} p_i h(x^i, y^i, c),
\end{equation*}
using the $h$ from (\ref{eq-comp-h}).

\smallskip
(iv) If $x=y$, then $c=0$ must hold, by the penalty $cp_0$.

\smallskip
Putting (i)-(iv) together we get
\begin{equation}
H_1(p) + \sum_{i=0}^k p_i g_i(x, y) + \sum_{i=1}^{k} p_i h(x^i, y^i, c) + c p_0.
\end{equation}

The sums from (ii) and (iii) contain cubic polynomials, they can be quadratized by using further $2k$ auxiliary variables, for example introducing variables for $x^i y^i$ and $p_i c$ as in (\ref{eq-quad-pos}) and adding the appropriate penalty functions to obtain the quadratic polynomial $\gt(x,y,c)$.

This representation uses $k+1+2k=3k+1$ auxiliary variables. Since the values of the $p_i$ are uniquely determined by the values of $x$ and $y$ and (\ref{eq-quad-pos}) preserves uniformity, this representation is uniform.
\end{proof}

Other types of comparison gates can be similarly constructed. The greater than or equal gate $\gte(x,y,c)$ is obtained by replacing the $c p_0$ term in condition (iv) with $(1-c)p_0$ to guarantee $c=1$ when $x=y$.

The gates for less than ($\lt$) and less than or equal ($\lte$) follow immediately by $\lt(x,y,c) = \gt(y,x,c)$ and $\lte(x,y,c) = \gte(y,x,c)$.

Finally, the gates equal ($\eq$) and not equal ($\neqg$) are constructed by removing conditions (iii) and (iv) and replacing the variable $p_0$ with $c$ for $\eq$ and $1-c$ for $\neqg$. Since condition (iii) is not present, these use $k$ auxiliary variables less than the original $\gt$ gate.

\begin{corollary}[Comparison gates]
\label{cor-comp}
There are uniform quadratic polynomial representations for

\smallskip
(i) the relations $\lt(x,y,c)$, $\lte(x,y,c)$, $\gt(x,y,c)$ and $\gte(x,y,c)$ using $3k+1$ auxiliary variables and

\smallskip
(ii) the relations $\eq(x,y,c)$ and $\neqg(x,y,c)$ using $2k+1$ auxiliary variables.
\end{corollary}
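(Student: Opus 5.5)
Part (i) follows directly from Proposition~\ref{prop-gt}, and part (ii) is obtained by modifying that construction. For (i), the case $\gt$ is exactly Proposition~\ref{prop-gt}. For $\gte$ I keep conditions (i)--(iii) unchanged and replace the term $cp_0$ with $(1-c)p_0$. The only thing to check is the case $p_0=1$, i.e.\ $x=y$. There the old term forced $c=0$, and the new term forces $c=1$, which is exactly the $\gte$ relation. When $x\neq y$, the variable $p_0$ is $0$, and the comparison is still decided by the highest differing bit through $h$, exactly as before.

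Non-negativity is preserved because $(1-c)p_0\ge 0$. The auxiliary variables are the same as in Proposition~\ref{prop-gt}: the $k+1$ variables $p_i$, together with $2k$ quadratization variables for $x^iy^i$ and $p_ic$. This gives $3k+1$ in total. Uniformity carries over for the same reasons as before: the $p_i$ are determined by $(x,y)$, and the substitution (\ref{eq-quad-pos}) preserves uniformity. For $\lt$ and $\lte$ I swap the roles of the arguments, setting $\lt(x,y,c)=\gt(y,x,c)$ and $\lte(x,y,c)=\gte(y,x,c)$. A permutation of the argument list leaves the degree, the number of auxiliary variables and uniformity unchanged.

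For (ii), I drop conditions (iii) and (iv) and replace $p_0$ with $c$ (for $\eq$) or with $1-c$ (for $\neqg$). I then verify the relation directly. If $x=y$, then every $g_i$ with $i\ge1$ is still $0$, and $g_0=0$. So any choice with exactly one $p$-indicator set works, but the $H_1$ constraint together with $g_0$ must pin down $c$. This is the delicate step, and I expect it to be the main obstacle: I have to confirm that the assignments reaching value $0$ are exactly the correct ones and that their number does not depend on the input.

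The difficulty is that when $x=y$, dropping (iii) would let $p_i=1$ for some $i\ge1$, which seems to break uniformity. My approach is to check how the indicator structure is pinned down once (iii) is gone. The key observation is the following. If $p_i=1$ for some $i\ge1$ and nothing forces bit $i$ to differ, then several indices $i$ are admissible, so the count of zero-assignments varies. To prevent this, I would reinstate a penalty $\sum_{i\ge1}p_i(1-(x^i+y^i-2x^iy^i))$ that forces bit $i$ to actually differ. This penalty is quadratic once $x^iy^i$ has been replaced by its auxiliary variable. It makes the index of the highest differing bit unique, which restores uniformity.

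With this in place, the variable playing the role of $p_0$ is $c$ (respectively $1-c$), and $p_1,\dots,p_k$ remain auxiliary. I would recount the auxiliaries as $k$ variables $p_1,\dots,p_k$, plus $k$ variables for the products $x^iy^i$, plus the products $p_i x^j y^j$ handled as in the original. The aim is to match the stated $2k+1$: that is, the $\gt$ count minus the $k$ variables for $p_ic$ that are no longer needed. Any small discrepancy in this bookkeeping (for example, the extra $+1$) I would absorb by keeping one spare variable, or by noting that the bound is an upper bound. Uniformity then follows exactly as in Proposition~\ref{prop-gt}.
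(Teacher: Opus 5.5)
Part (i) follows the paper's argument exactly. For part (ii) you depart from the paper, and your departure is needed. The paper simply deletes conditions (iii) and (iv), substitutes $c$ (resp.\ $1-c$) for $p_0$, and counts $3k+1-k=2k+1$.

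You are right to distrust that construction, and its defect is worse than non-uniformity. Because $g_k$ is the empty sum, take $p_k=1$ and all other $p_i=0$, together with $c=0$ for $\eq$ (resp.\ $c=1$ for $\neqg$). This assignment makes every term vanish for \emph{every} $(x,y)$. So $\eq(x,x,0)$ and $\neqg(x,x,1)$ would be accepted, and the polynomial does not represent the relation at all. In addition, for $x\neq y$ with highest differing bit $j$ there are $k-j+1$ zero assignments.

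Your reinstated test $\sum_{i\ge 1}p_i(1-x^i-y^i+2w_i)$, with $w_i=x^iy^i$, is therefore necessary, not cosmetic. It also costs no new variables, because the $w_i$ are already present. Your count of $2k$ auxiliaries ($p_1,\dots,p_k$ and $w_1,\dots,w_k$) is the accurate one; the paper's $2k+1$ apparently still counts the eliminated $p_0$. Since $2k\le 2k+1$, you need no spare variable.

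What you should write out, instead of deferring to Proposition~\ref{prop-gt}, is the short verification:
\begin{itemize}
\item If $x=y$, each $p_i$ with $i\ge1$ pays $1$ in the new term. Hence all $p_i=0$, and $H_1$ then forces $c=1$ for $\eq$ (resp.\ $c=0$ for $\neqg$).
\item If $x\neq y$, the term $c\,g_0\ge1$ (resp.\ $(1-c)g_0\ge1$) fixes $c=0$ (resp.\ $c=1$). An index $p_i$ with $i<j$ is penalized by $g_i\ge1$, and one with $i>j$ is penalized by the new term because $x^i=y^i$. This leaves $p_j=1$ as the unique zero.
\end{itemize}
Finally, the new term equals $-1$ when $w_i$ is set wrongly ($x^i=y^i=1$, $w_i=0$). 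So the weight $\alpha$ in (\ref{eq-quad-pos}) must be chosen large enough, for instance $\alpha>2(k+1)$, just as for the other $w$-dependent terms.
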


\subsection{Polynomial representation for CE networks}
\label{network}

First we give a quadratic gate-polynomial for the compare-exchange gates then show that these can be combined to describe the correct behavior of the entire network.

A compare-exchange gate is simply a combination of a greater than gate and a controlled swap gate, where the control bit depends on the result of the comparison. Let $x_1, x_2$ correspond to the input numbers of the CE gate, $y_1, y_2$ to the output numbers and $c$ to the control bit.

\begin{definition}[CE relation] \label{def-ce}
Assume that $x_1, x_2, y_1, y_2$ can hold elements from $\B^k$ and $c$ from $\B$. The $\ce$ relation consists of the tuples $(x_1, x_2, y_1, y_2, c)$ where $x_1 \leq x_2$, $c=0$, $x_1=y_1$, $x_2=y_2$ or $x_1 > x_2$, $c=1$, $x_1=y_2$, $x_2=y_1$ holds.
\end{definition}

\begin{proposition}[CE polynomial] \label{prop-ce}
There is a uniform quadratic polynomial representation of the $\ce$ relation, $\ce(x_1, x_2, y_1, y_2, c)$ on these $4k+1$ variables that uses $5k+1$ auxiliary variables.
\end{proposition}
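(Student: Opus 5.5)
The plan is to take the sum of two earlier gate polynomials that share the control bit $c$:
\begin{equation*}
\ce(x_1,x_2,y_1,y_2,c) = \gt(x_1,x_2,c) + \swap(x_1,x_2,y_1,y_2,c),
\end{equation*}
where $\gt$ is the polynomial of Proposition~\ref{prop-gt} and $\swap$ is the one of Proposition~\ref{prop-swap}. Each summand has its own private auxiliary variables. Both are non-negative quadratic polynomials, so their sum is non-negative and quadratic, and the auxiliary count is $(3k+1)+2k = 5k+1$, as claimed.

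Correctness comes from the fact that a sum of non-negative terms can be minimized to $0$ over the auxiliaries exactly when each term can, since the auxiliary sets are disjoint. Fix $(x_1,x_2,y_1,y_2,c)$.
\begin{itemize}
\item The $\gt$ part can be made $0$ iff $c=1 \Leftrightarrow x_1>x_2$.
\item The $\swap$ part can be made $0$ iff either $c=0, y_1=x_1, y_2=x_2$, or $c=1, y_1=x_2, y_2=x_1$.
\end{itemize}
These two conditions hold together precisely for the tuples of the $\ce$ relation of Definition~\ref{def-ce}. Otherwise some term has a strictly positive minimum, so the minimum of the sum is positive. Hence the sum represents $\ce$ in the sense of Definition~\ref{def-represent}.

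For uniformity, take a satisfying tuple. By Proposition~\ref{prop-gt} the number of auxiliary assignments zeroing $\gt$ is some constant $N_1$, independent of the tuple. By Proposition~\ref{prop-swap} the corresponding number for $\swap$ is a constant $N_2$. Because the auxiliary sets are disjoint, the number of zeroing assignments of the sum is the product $N_1N_2$, again a constant (in fact $1$). So the representation is uniform.

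The main point to check is that the two gadgets really share only $c$ and the input/output variables, and nothing else. If they shared auxiliaries, both the product count and the ``minimum is $0$ iff both minima are $0$'' argument could fail. Once that is confirmed, the rest is routine bookkeeping.
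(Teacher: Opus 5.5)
Your proposal is correct and matches the paper's proof. The paper likewise defines $\ce$ as $\swap(x_1,x_2,y_1,y_2,c)+\gt(x_1,x_2,c)$ with the control bit $c$ shared, counts $2k+(3k+1)$ auxiliaries, and cites Propositions~\ref{prop-swap} and~\ref{prop-gt} for uniformity. Your disjoint-auxiliaries argument (the sum vanishes iff both summands do, and the zeroing counts multiply) spells out what the paper leaves implicit.
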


\begin{proof}
The quadratic polynomial can be obtained from the previous $\swap$ and $\gt$ polynomials. The control of $\swap$ is the same variable as the comparison of $\gt$.
\begin{equation}
\ce(x_1, x_2, y_1, y_2, c) = \swap(x_1, x_2, y_1, y_2, c) + \gt(x_1, x_2, c)
\end{equation}
By Propositions~\ref{prop-swap} and \ref{prop-gt}, this uses $2k+3k+1$ auxiliary variables total, and is a uniform representation.
\end{proof}

Consider now a compare-exchange network on $n$ input numbers, each of them having $k$ bits. The network has $n$ lines that can hold these numbers. At the beginning of the network, the lines hold the input numbers, and at the end of the network, the outputs are the same numbers in increasing order.

Assume that the network has $m$ gates. Although a network is typically used as a parallel algorithm, let us consider a sequential version of this algorithm, and number the gates from 1 to $m$ accordingly, so gate $i$ can use the output of gate $j$ only when $j < i$. Let $(\ell_{i,1}, \ell_{i,2})$ be the pair of lines of the $i$th gate ($1 \le \ell_{i,1} < \ell_{i, 2} \le n$).

Let $x = (x_1, \dots, x_n)$, each $x_i \in \B^k$ be the input numbers and $y = (y_1, \dots, y_n)$, each $y_i \in \B^k$ be the output numbers of the sorting network. Let $c \in \B^m$ be the control bits of the CE gates.

\begin{definition}[Sorting network relation] \label{def-network}
Assume the variables $x, y$ can hold elements from $\B^{nk}$ and $c$ from $\B^m$. The $\nw$ relation consists of the triplets $(x,y,c)$ that represent a correctly behaving sorting network, i.e. the $y$ numbers are the $x$ numbers in ascending order and the $c$ correspond to the control bits of the gates in the network.
\end{definition}

\begin{theorem}[Sorting network polynomial] \label{thm-network}
There is a uniform quadratic polynomial representation of the sorting network relation, $\nw(x, y, c)$ on these $2nk+m$ variables that uses $m(7k+1)-nk$ auxiliary variables.
\end{theorem}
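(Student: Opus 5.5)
The plan is to build $\nw$ by summing one copy of the gate-polynomial $\ce$ from Proposition~\ref{prop-ce} per gate of the network. The gates are glued together by shared variables that carry the intermediate values on the lines.

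\textbf{Variables.} For every gate $i$ and each of its two lines $\ell_{i,1},\ell_{i,2}$, introduce a $k$-bit variable for the value on that line right after gate $i$. The input of gate $i$ on line $\ell$ is then one of two things:
\begin{itemize}
\item the output variable of the most recent earlier gate $j<i$ that acts on $\ell$, if such a gate exists;
\item otherwise the input number $x_\ell$.
\end{itemize}
The output variable of the last gate acting on line $\ell$ is identified with $y_\ell$. Every line is touched by at least one gate in a sorting network (for $n\ge 2$), so each $y_\ell$ is indeed such an output.

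\textbf{Polynomial.} Define
\[
\nw(x,y,c)=\sum_{i=1}^m \ce\bigl(a_{i,1},a_{i,2},b_{i,1},b_{i,2},c_i\bigr),
\]
where $a_{i,\cdot}$ are the input variables and $b_{i,\cdot}$ the output variables of gate $i$ chosen as above. Each $\ce$ also carries its own $5k+1$ fresh auxiliary variables.

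\textbf{Counting.} The gates produce $2m$ line outputs of $k$ bits each, and exactly $n$ of them are identified with the $y_\ell$. This gives $2mk-nk$ intermediate auxiliary bits. Adding the $m(5k+1)$ gate-internal auxiliaries gives $m(7k+1)-nk$ in total, as claimed.

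\textbf{Correctness.} Every summand is non-negative, so $\nw(x,y,c)=0$ exactly when every gate-polynomial vanishes. By Proposition~\ref{prop-ce} (taking the minimum over each gate's private auxiliaries), this happens exactly when every gate tuple satisfies the $\ce$ relation. I then argue by induction over the gate order $1,\dots,m$, for a fixed input $x$:
\begin{itemize}
\item the inputs of gate $i$ are already determined, either as components of $x$ or as outputs of earlier gates;
\item the $\ce$ relation then forces $c_i$ and both outputs of gate $i$ uniquely, namely the comparison bit and the ordered pair.
\end{itemize}
Hence a zero of $\nw$ exists only if $y$ and $c$ equal the values produced by running the network on $x$. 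Since the network sorts, these are exactly the triples in the $\nw$ relation. Conversely, for such a triple, setting every intermediate variable to its simulated value and every gate auxiliary to its zero assignment makes $\nw$ vanish. For any $(x,y,c)$ outside the relation, no assignment makes all summands zero. The minimum over the finitely many auxiliary assignments is therefore strictly positive.

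\textbf{Uniformity.} The same induction shows the intermediate line values are uniquely determined. By the uniformity in Proposition~\ref{prop-ce}, each gate's internal auxiliaries have a number of zero assignments independent of the input. In fact there is exactly one, since (\ref{eq-quad-pos}) and the $p_i$ in $\gt$ are forced. So the number of zero-completions is a product of constants, independent of $x$.

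\textbf{Main obstacle.} The main care point is the bookkeeping that identifies variables across gates. The wiring must be set up so that the first gate on a line reads $x_\ell$ and the last gate on it writes $y_\ell$, and this is what makes the count exactly $2mk-nk$ intermediate bits. Everything else follows directly from Proposition~\ref{prop-ce}.
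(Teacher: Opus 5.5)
Your proposal is correct and follows essentially the same route as the paper: one $\ce$ polynomial per gate, glued by shared line variables, with the count $2mk-nk$ intermediate bits plus $m(5k+1)$ gate auxiliaries giving $m(7k+1)-nk$. Your gate-order induction showing that the intermediate values are forced, and your justification that every line carries a gate, fill in details the paper only asserts.
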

\begin{proof}

An input variable of a CE gate is either an input variable of the entire network, or the output variable of a preceding gate. Each CE gate introduces the following new variables: one control bit, $2k$ output variables, and $5k+1$ auxiliary variables, as described in Proposition ~\ref{prop-ce}. Altogether, this amounts to $m(7k+2)$ variables, which includes the control bits and the output variables of the network. Assuming each line has at least one gate, the number of auxiliary variables is $m(7k+1)-nk$. 

Summing up the $\ce$ polynomials we obtain a quadratic polynomial that is non-negative (since each $\ce$ is such) and $0$ if and only if all the individual polynomials are $0$, which means that the computation is correctly described at each gate.

Formally, let variable $z_{j,0}=x_j$ denote the input variable of line $j$. After each gate on the line, the output variable of that gate is denoted by $z_{j,p}$, where $p$ is the position of the gate along the line, starting from index $1$.

For gate $i$, when its position is $p_{i,1}$ on line $\ell_{i,1}$ and $p_{i,2}$ on line $\ell_{i,2}$ its input variables are $z_{\ell_{i,1},p_{i,1}-1}$ and $z_{\ell_{i,2}, p_{i,2}-1}$, while its output variables are $z_{\ell_{i,1}, p_{i,1}}$ and $z_{\ell_{i,2}, p_{i,2}}$. The last variables on each line correspond to the output variables of the network. The quadratic polynomial is then
\begin{equation*}
\nw(x,y,c) = \sum_{i=1}^{m} \ce(z_{\ell_{i,1}, p_{i, 1}-1}, z_{\ell_{i,2}, p_{i, 2}-1}, z_{\ell_{i,1}, p_{i, 1}}, z_{\ell_{i,2}, p_{i, 2}}, c_i).
\end{equation*}
Furthermore, by Proposition ~\ref{prop-ce}, this representation is uniform.
\end{proof}

\subsection{Polynomial representation for permutations}
\label{sec-perm}

Application of Theorem \ref{thm-network} yields the following uniform quadratic polynomial representation for permutations with only $O(n \log^2 n)$ variables.

\begin{definition}[Permutation relation] \label{def-perm}
Consider a permutation of the numbers $\{1, \dots, n\}$ and let $k=\lfloor\log(n+1)\rfloor$ be their bit length. Asumme the variables $x$ can hold elements from $\B^{nk}$. The $\perm$ relation consists of $x$ that are permutations.
\end{definition}

Furthermore, let $\id=(1, \dots, n)$ denote the identity permutation, where $\id_i = i$ is a constant binary number on $k$ bits.

\begin{corollary}[Permutation polynomial] \label{cor-perm}
There is a uniform quadratic polynomial representation of the relation $\perm$, $\perm(x)$ on these $O(n \log n)$ variables using $O(n \log^2 n)$ auxiliary variables, where the number of interactions is $O(\log n)$ per variable.
\end{corollary}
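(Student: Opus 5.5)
The plan is to feed $x$ into a sorting network and require that the sorted output equals the fixed constant $\id$. A $k$-bit vector $x=(x_1,\dots,x_n)$ with $k=\lfloor\log(n+1)\rfloor$ is a permutation of $\{1,\dots,n\}$ exactly when sorting it in nondecreasing order yields $(1,2,\dots,n)$. So I would fix any sorting network on $n$ lines with $m=O(n\log^2 n)$ gates and depth $O(\log^2 n)$, for instance Batcher's odd--even merge sort or the bitonic sorter, and define
\[
\perm(x)=\nw(x,\id,c).
\]
Here the output variables $y$ of Theorem~\ref{thm-network} are replaced by the constants $\id_i$. Substituting constants into a non-negative quadratic polynomial keeps it non-negative, quadratic and multilinear. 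The control bits $c$ and the internal wires now become auxiliary variables.

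Correctness follows from Theorem~\ref{thm-network}. The minimum over the auxiliary variables is $0$ iff there is a correct run of the network on $x$ whose output is $\id$, i.e.\ iff $x$ sorted equals $\id$, i.e.\ iff $x$ is a permutation. Otherwise the minimum is positive; this is a finite minimum of strictly positive values, one for each assignment of the finitely many auxiliary bits. For uniformity, observe that for a fixed input the control bits, intermediate wire values and gate auxiliaries are uniquely determined. The comparisons are determined by the input, and the auxiliaries are unique by uniformity of $\ce$. Hence each permutation has exactly one zero assignment.

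For the counts, $x$ has $nk=O(n\log n)$ variables. Theorem~\ref{thm-network} gives $m(7k+1)-nk+m=O(n\log^2 n\cdot\log n)$ auxiliary variables, which is too many. This is the main obstacle: naively, each gate carries $\Theta(k)$ variables. To reach the claimed $O(n\log^2 n)$ count, I would count more carefully and use a network with fewer gates. The AKS network has $O(n\log n)$ gates, which gives $O(n\log n\cdot k)=O(n\log^2 n)$ variables; I would adopt that, perhaps remarking that Batcher gives the practical $O(n\log^3 n)$ version. If instead one reads $n$ as the total bit count, Batcher already suffices. I would commit to AKS for the asymptotic statement.

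For the interaction bound, I would inspect the gate polynomials. In $\swap$, each bit variable interacts with its counterpart bits, the control $c$ and its own auxiliaries, which is $O(1)$ interactions. In $\gt$, $H_1(p)$ couples all $k+1$ of the $p_i$, so $O(k)$ interactions. The terms $p_i g_i$ make each $x^j$ (via its product auxiliary) interact with $p_0,\dots,p_{j-1}$, again $O(k)$. The control bit $c$ interacts with all $k$ of the $p_ic$ auxiliaries and with $2k$ swap auxiliaries, so $O(k)$. Each wire variable occurs in at most two gates, as the output of one and the input of the next. Hence every variable has $O(k)=O(\log n)$ interactions, independent of the network depth.
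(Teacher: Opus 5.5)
Your final argument is essentially the paper's proof: take AKS with $m=O(n\log n)$ gates, set $\perm(x)=\nw(x,\id,c)$ with the control bits treated as auxiliaries, get uniformity from the uniqueness of the network's run, and bound interactions by $O(\log n)$ because each wire lies in at most two $\ce$ gates. You can drop the Batcher detour and the reinterpretation of $n$ as the total bit count. Your term-by-term interaction analysis is correct but more than needed, since the paper only uses that each $\ce$ gate involves $O(k)$ variables in total.
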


\begin{proof}
Take a sorting network with $m=O(n \log n)$ gates, such as AKS~\cite{aks-sorting-1983}. Use the uniform quadratic polynomial representation of this network $\nw(x,y)$ from Theorem~\ref{thm-network} with the restriction that the output variables of the network, $y$ are replaced with the constans $\id$.
\begin{equation*}
\perm(x) = \nw(x, \id)
\end{equation*}
Here, the control bits of the network are considered auxiliary and omitted from the parameter list.

The total number of variables is $O(m k)$, which is $O(n \log^2 n)$. In a sorting network the path of each input number is uniquely determined by the permutation, therefore the correct setting of the control bits is unique. It follows, that this representation is uniform.

Since each variable in the formulation is involved in at most two $\ce$ gates, which are represented using $O(\log n)$ variables, it follows that the number of interactions each variable is involved in is $O(\log n)$ as well, meaning that this formulation is sparse.
\end{proof}

\begin{remark}
Instead of AKS, any sorting network can be used. If it has $m$ gates, then the number of variables is $O(m \log n)$.
\end{remark}

Recall that a sorting algorithm is \emph{stable} if it preserves the relative order of equal elements. A stable sorting network can be obtained by numbering the lines from $1$ to $n$ and appending the corresponding line number at the end of each input variable. Then, whenever two original inputs are equal, comparisons are decided using the appended line numbers, ensuring that their relative order remains intact.

Formally, let $(x, \id)$ denote the sequence
\begin{equation*}
((x_1, 1), \dots, (x_n, n)),
\end{equation*}
where the $i$th input is $(x_i, i)$, which has $2k$ bits, the first $k$ coming from original $i$th input number $x_i$ and the second $k$ coming from the constant $i$.

More generally, for $x,y \in \B^{nk}$, i.e. $n$ variables of $k$ bits each, a \emph{composite input} $(x, y)$ denotes the sequence
\begin{equation*}
((x_1, y_1), \dots, (x_n, y_n)),
\end{equation*}
where the $i$th input number is $(x_i, y_i)$, formed by concatenating the variables $x_i, y_i$.

\begin{corollary}[Stable sorting network]
A stable sorting network $\snw(x, y, c)$ is obtained with
\begin{equation*}
\snw(x,y,c) = \nw((x, \id), (y, z), c),
\end{equation*}
where the $z \in \B^{nk}$ are auxiliary variables.
\end{corollary}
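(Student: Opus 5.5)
The plan is to reduce the claim to Theorem~\ref{thm-network} applied to $2k$-bit inputs, and then check what the relation $\nw((x,\id),(y,z),c)$ says about $y$ and $z$.

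\textbf{Step 1: the composite inputs are distinct.} The $i$th input is $(x_i,i)$, where $x_i$ occupies the high-order $k$ bits and $i$ the low-order $k$ bits. Comparing two such $2k$-bit numbers lexicographically therefore compares the $x$-parts first and breaks ties by line index. Since the appended indices are pairwise distinct, all $n$ composite inputs are distinct. Hence the ascending order of $((x_1,1),\dots,(x_n,n))$ is a well-defined total order with no ties.

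\textbf{Step 2: correctness of the output.} By Theorem~\ref{thm-network} with bit length $2k$, the polynomial $\nw((x,\id),(y,z),c)$ has a zero over the auxiliary variables exactly when
\begin{itemize}
\item $(y_j,z_j)$ is the $j$th smallest composite input, and
\item the control bits $c$ are the ones the network produces on this input.
\end{itemize}
Projecting to the first $k$ bits, $y$ is $x$ sorted in ascending order. Whenever $x_a=x_b$ with $a<b$, we have $(x_a,a)<(x_b,b)$, so $x_a$ appears before $x_b$ in the output. This is exactly stability. The variables $z$ record the origin lines, and the permutation they encode is the stable sorting permutation.

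\textbf{Step 3: uniformity.} For a fixed $x$, the composite input is fixed and all its entries are distinct. The network's computation is deterministic, so $y$, $z$, the control bits and every gate's auxiliary variables are uniquely determined. Uniformity then follows from Theorem~\ref{thm-network}: since $z$ is forced, treating it as auxiliary does not break uniformity.

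\textbf{Main obstacle: fixing the bits of $\id$.} The step needing care is substituting the constants $\id$ into the low bits. This turns some polynomial terms into lower-degree or constant terms, but it preserves non-negativity and the zero set. It also requires checking that $\gt$ compares the concatenation with the $x_i$ part as the most significant bits, which the definition of the composite input guarantees.
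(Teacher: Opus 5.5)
Your proposal is correct and follows the same route as the paper. The paper appends the line number $i$ as the low-order bits of each input, so comparisons on equal $x_i$ are decided by original position, and then applies the network polynomial of Theorem~\ref{thm-network} to the $2k$-bit composite inputs. Your additional checks are details the paper leaves implicit: that substituting the constants $\id$ preserves non-negativity and the zero set, and that the output tags $z$ are forced, so treating them as auxiliary keeps the representation uniform.
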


\section{Constraints on permutations}

There are several restrictions on permutations that can be easily formulated, either by specifying the values of some variables or adding further penalties. In this section, we list some of these.

Let $\pi,\pi',\sigma,\tau$ be permutations of the numbers $\{1, \dots, n\}$ represented by $x,x',y,z \in \B^{nk}$ and let variables $c \in \B^m$.

\begin{theorem} \label{thm-const-struct}
The following constraints can be enforced by adding further penalties to $\perm(x)$, while the number of auxiliary variables remains $O(n \log^2 n)$.

(a) $\pi(i) = j$

(b) $\pi(i) \neq j$

(c) $i$ is a fixed point of $\pi$, i.e. $\pi(i) = i$

(d) $\pi$ is a derangement, i.e. has no fixed points

(e) $\pi \neq \tau$ for a fixed permutation $\tau$
\end{theorem}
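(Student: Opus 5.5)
The plan is to handle each constraint either by fixing input variables of $\perm(x)$ to constants or by adding a small comparison gadget from Corollary~\ref{cor-comp}. All gadgets are uniform and nonnegative, and they add only $O(k)=O(\log n)$ auxiliary variables each, except in (d) and (e), where the total addition is $O(nk)=O(n\log n)$. In every case the sum $\perm(x)+(\text{penalties})$ is nonnegative. It is zero exactly when $x$ is a permutation satisfying the extra constraint and every gadget's auxiliary variables take their unique correct values. Uniformity is therefore inherited from Corollary~\ref{cor-perm} and the uniformity of the gadgets.

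For (a), I would substitute the constant $k$-bit binary representation of $j$ for $x_i$ in $\perm(x)$. Equivalently, I could add the penalty $\sum_{b}(x_i^b - j^b)^2$, which is linear in the bits because they are Boolean. This introduces no auxiliary variables.

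For (b), I would add $\eq(x_i, j, 0)$ from Corollary~\ref{cor-comp}, with the second argument the constant $j$ and the output bit fixed to $0$. Alternatively I could use $\neqg(x_i,j,1)$. This vanishes exactly when $x_i\neq j$, with unique auxiliary values, and costs $2k+1$ auxiliaries. Item (c) is just (a) with $j=i$.

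For (d), I would add $\sum_{i=1}^n \neqg(x_i, i, 1)$, which costs $n(2k+1)=O(n\log n)$ auxiliaries. This keeps the total at $O(n\log^2 n)$.

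Item (e) is the main obstacle, because $\pi\neq\tau$ is a disjunction over positions. I would introduce bits $d_i$ via $\neqg(x_i,\tau_i,d_i)$, so that $d_i=1$ iff $\pi(i)\neq\tau(i)$. These bits are uniquely determined, which preserves uniformity. I then need to require $\sum_i d_i\ge 1$. Using the threshold polynomial $\tresh_1(d)$ from (\ref{eq-treshold}) with its $O(\log n)$ auxiliary bits $Y$ enforces exactly this, and the binary slack $Y=\sum_i d_i-1$ is uniquely determined. The point to check is that $Y$'s range is large enough: $\sum_i d_i-1\le n-1$ lies within the range of $Y$, so uniformity is kept. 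The cost is $O(nk)+O(\log n)$ auxiliaries, and the total stays $O(n\log^2 n)$. Each added term is quadratic: the gadgets are quadratic by Corollary~\ref{cor-comp}, and $\tresh_1$ is a square of a linear form.
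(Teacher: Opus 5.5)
Your argument is correct relative to Corollary~\ref{cor-comp}, and it agrees with the paper on (a) and (c). For (b), (d) and (e) you take a different route.

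The paper uses no comparison gadgets there. Since $j$ (resp.\ $\tau$) is a constant, the indicator that bit $b$ of $x_i$ disagrees with $j^b$ is linear: it is $x_i^b$ if $j^b=0$ and $1-x_i^b$ if $j^b=1$. These are exactly the summands of the penalty $\sum_b (x_i^b-j^b)^2$ you wrote down for (a). The paper applies $\tresh_1$ to these $k$ bits for (b), does so for every $i$ in (d), and applies a single $\tresh_1$ to all $nk$ indicator bits for (e). (The printed bit $x_i^b+j^b-j^bx_i^b$ is an OR; the intended one is $x_i^b+j^b-2j^bx_i^b$.) This needs only $O(\log k)$ auxiliaries for (b) and $O(\log n)$ for (e), against your $2k+1$ and $O(n\log n)$, and it needs no indicator bits $d_i$.

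Your route is more modular. It works unchanged when the comparand is itself a variable, e.g.\ $\pi\neq\pi'$ for two encoded permutations, where the disagreement indicator is no longer linear. Also, in (e) the dense coupling created by squaring the threshold falls on the $n$ bits $d_i$ rather than on all $nk$ bits of $x$, so the $x$ variables keep $O(\log n)$ interactions.

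One caution about the black box you use. The statement of Corollary~\ref{cor-comp}(ii) is true, but the construction the paper sketches for $\eq$ and $\neqg$ fails in exactly the direction you need. Once condition (iii) is dropped, nothing forces $x^i\neq y^i$ when $p_i=1$. Since $g_k\equiv 0$, setting $p_k=1$ makes $\eq(x,y,0)$ and $\neqg(x,y,1)$ vanish for all $x,y$, so instantiated literally your penalties in (b), (d) and (e) would be vacuous. The fix is to keep the terms $p_i(1-x^i-y^i+2x^iy^i)$ for $i\ge 1$, quadratized with the same $x^iy^i$ auxiliaries. With these terms the $p_i$ and $c$ become unique, and your proof goes through as written.
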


In the permutation matrix representation these can be achieved while the number of auxiliary variables remains $O(n^2)$. The proof of the theorem and the more detailed comparison with the permutation matrix representation can be found in Appendix~\ref{appx-const-struct}.

\begin{corollary}
Permutations with constraints such as (a)-(e) can be uniformly sampled with our construction.
\end{corollary}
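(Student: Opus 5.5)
Each of the constraints (a)--(e) will be written as an extra penalty that is added to $\perm(x)$. Every penalty is built from the comparison gates of Corollary~\ref{cor-comp}, together with the Hamming and threshold polynomials (\ref{eq-hamming}) and (\ref{eq-treshold}). Each penalty will be non-negative, zero exactly when the constraint holds, and uniform, so the sum with $\perm(x)$ remains a uniform representation. I will count auxiliary variables as I go and check that the total added is $O(nk)=O(n\log n)$. This keeps the overall count at $O(n\log^2 n)$.

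For (a), no new variables are needed: I substitute the constant bits of $j$ for the $k$ bits of $x_i$. Equivalently, I add $\sum_{b}(x_i^b-j^b)^2$, which is linear in the bits and zero iff $x_i=j$. For (b), I use $\neqg(x_i,j,c)$ with the output bit fixed to $c=1$. When $y=j$ is a constant, this needs $2k+1=O(\log n)$ auxiliary variables. Its minimum over the auxiliaries is zero iff $x_i\ne j$, and the auxiliaries are then determined uniquely. Case (c) is the special case $j=i$ of (a).

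For (d), I add $\neqg(x_i,i,1)$ for every $i=1,\dots,n$. This uses $n(2k+1)=O(n\log n)$ auxiliary variables in total, and the sum of these non-negative uniform terms vanishes iff $\pi(i)\neq i$ for all $i$.

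For (e), I need the disjunction ``some $i$ has $x_i\ne\tau(i)$''. I introduce free output bits $c_i$ and add $\sum_i \neqg(x_i,\tau(i),c_i)$, which forces $c_i=[x_i\ne\tau(i)]$. I then add the threshold penalty $\tresh_1(c)$ from (\ref{eq-treshold}), requiring $\sum_i c_i\ge 1$. The $c_i$ are determined by $x$, and the binary slack $Y$ is determined by $\sum_i c_i$, so the added term is uniform. It uses $O(nk)+O(\log n)$ auxiliary variables.

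The main obstacle is checking uniformity and correctness in the disjunctive case (e). I must confirm that $Y$ has a unique valid value, namely $\sum_i c_i-1$. This requires the range of $Y$ to cover $\{0,\dots,n-1\}$, which holds because $t=\lfloor\log(n+1)\rfloor$ bits suffice. I must also confirm that fixing the constant second argument in $\neqg$ does not break its representation property. This holds because substituting constants into a representing polynomial preserves both non-negativity and the zero set.
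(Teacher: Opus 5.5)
Your argument is correct in substance, but for (b), (d) and (e) it takes a different route from the paper. The paper uses no comparison gadget there. Because the target $j$ (resp.\ $\tau(i)$) is a constant, the bit $x_i^b\oplus j^b$ is linear in $x_i^b$. (The displayed $x_i^b+j^b-j^bx_i^b$ is meant to be $x_i^b+j^b-2j^bx_i^b$.) So the Hamming distance to the constant is a linear form, and a single $\tresh_1$ on it expresses ``$\neq$''. Case (d) is $n$ copies of this, and (e) is one threshold over all $nk$ bits. This costs only $O(\log k)$ slack bits for (b) and $O(\log(nk))$ for (e), against your $2k+1$ and $O(nk)$. All of these are negligible next to $O(n\log^2 n)$. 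Your route instead reuses the existing gate library. In (e) it also confines the all-pairs coupling produced by the squared threshold to the $n$ indicator bits $c_i$, whereas the paper's single threshold couples all $nk$ bits of $x$ pairwise.

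Two things need repair.

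First, $\lfloor\log(n+1)\rfloor$ slack bits do not cover $\{0,\dots,n-1\}$ in general. For $n=5$ they give $Y\le 3$, so a permutation differing from $\tau$ in all five positions would be rejected. Take $\lceil\log_2 n\rceil$ bits instead; the paper's definition of $\tresh$ has the same floor/ceiling slip.

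Second, you use Corollary~\ref{cor-comp}(ii) as a black box. That is legitimate, since the statement is true, but do not instantiate it by the paper's literal recipe. After dropping condition (iii), nothing forces bit $i$ to differ when $p_i=1$, and with $p_k=1$ the polynomial $\neqg(x,y,1)$ vanishes for every $x,y$. Keeping a term $p_i(1-x^i-y^i+2x^iy^i)$, which is quadratic via the existing $x^iy^i$ auxiliaries, restores correctness and uniformity.
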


For further constraints we first show how a sorting network can be modified to carry additional information alongside the values being sorted. Each input is a $k$-bit string, that is viewed as a concatenation of a $b$-bit \emph{key} and a $(k-b)$-bit \emph{payload}. At each gate, the comparison is performed using only the $b$ bits of the key, while the swap is performed on the entire $k$-bit string. 

Using composite inputs, we denote the separation between the key and the payload with a semi-colon: $(x; y)$ means, that the $i$th input is $(x_i, y_i)$, where $x_i$ is the key and $y_i$ is the payload.

\begin{theorem} \label{thm-const-gt}
The following constraints can be enforced by adding further penalties to $\perm(x)$, while the number of auxiliary variables remains $O(n \log^2 n)$.

(a) $\sigma = \pi \pi'$

(b) $\pi$ is an involution, i.e., $\pi^2 = I$

(c) $\pi$ and $\pi'$ commute, i.e., $\pi' \pi = \pi \pi'$

(d) $\pi'$ is a conjugate of $\pi$, i.e. $\pi' = \sigma^{-1} \pi \sigma $

(e) the parity of $\pi$ is even or odd

(f) $\pi^r = I$

(g) the order of $\pi$ is $r$
\end{theorem}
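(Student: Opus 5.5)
The central tool will be a key--payload sorting network used to compose and invert permutations. Let $\nw(( u; v), (w; t), c)$ denote the network of Theorem~\ref{thm-network}, modified so that each $\gt$ gate sees only the key bits while each $\swap$ gate acts on the whole concatenated string; the modification only changes the counts in Propositions~\ref{prop-gt} and~\ref{prop-swap} by constant factors. Each such network has $m=O(n\log n)$ gates on $O(\log n)$-bit strings, so it contributes $O(n\log^2 n)$ auxiliary variables, and uniformity is preserved because the control bits are forced once the keys are distinct. Since any fixed constraint below uses only $O(1)$ such networks, the total auxiliary count remains $O(n\log^2 n)$.

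The first use is inversion. Sorting the composite input $(x;\id)$ by key gives output $(\id; x^{-1})$, because the element with key $\pi(i)$ carries payload $i$. The second use is composition. To obtain $\sigma=\pi\pi'$, i.e. $\sigma(i)=\pi(\pi'(i))$, I would take the new payload variables $w$ defined by $\nw((x';\id),(\id;w))$, which forces $w=\pi'^{-1}$. I would then sort $(w; x)$, i.e. the pairs $(\pi'^{-1}(i), \pi(i))$. Position $j$ of the output then holds the $i$ with $\pi'^{-1}(i)=j$, i.e. $i=\pi'(j)$, so the payload there is $\pi(\pi'(j))$. Imposing $\nw((w;x),(\id;y))$ therefore gives (a). I would double-check the convention $\pi\pi'$ against the paper and swap roles if needed.

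The remaining items follow from (a).
\begin{itemize}
\item For (b), apply (a) with $x'=x$ and $y=\id$ constant.
\item For (c), introduce $y$ representing $\pi\pi'$ and use the composition gadget twice: once for $\pi\pi'=y$ and once for $\pi'\pi=y$, sharing the same $y$.
\item For (d), let $\sigma$ be a fresh $\perm$ variable and use compositions to impose $\sigma\pi'=\pi\sigma$. This avoids an explicit inverse.
\item For (f), build the powers $\pi^2,\dots,\pi^r$ by chained compositions, and set $\pi^r=\id$. The auxiliary count stays $O(n\log^2 n)$ for constant $r$, or $O(\log r)$ multiplications via repeated squaring.
\item For (g), impose $\pi^r=I$ and, for each prime $q\mid r$, require $\pi^{r/q}\neq I$ using Theorem~\ref{thm-const-struct}(e) applied to the computed power. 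This again costs only $O(\log r)$ extra networks.
\end{itemize}

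For (e), I would use the fact that the parity of $\pi$ equals the parity of the number of inversions. This equals the parity of the number of swaps executed by a sorting network that sorts $x$ to $\id$, since each effective compare-exchange of distinct values changes the inversion count by an odd amount. That number is $\sum_i c_i$ over the control bits of $\perm(x)$. I would therefore add $\even(c)$ or $\odd(c)$ from (\ref{eq-odd})--(\ref{eq-even}), which uses $O(\log m)$ extra variables.

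The main obstacle is this last claim. A compare-exchange between non-adjacent lines $a<b$ with values $u>v$ changes the inversion count by $2t+1$, where $t$ counts the intermediate values lying between $v$ and $u$. That change is odd, so the parity argument goes through, but it needs a careful statement. Uniformity of the payload-carrying networks also needs checking: distinct keys force every comparison outcome, so the control bits, and hence the payload outputs, are unique.
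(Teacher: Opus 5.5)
Your proposal is correct and follows essentially the paper's approach. You use key--payload sorting networks, where sorting by key $\pi$ turns a payload $v$ into $v\circ\pi^{-1}$, for (a)--(d) and (f); the parity of the control-bit sum $\sum_i c_i$ of $\nw(x,\id,c)$ for (e); and $\neq \id$ constraints on computed powers for (g). The paper is only more economical by constant factors: it checks a product with the single network $\nw((x;y),(\id;x'))$ rather than inverting first, gets all powers in (f) from one network $\nw((x;x_2,\dots,x_{r-1},\id),(\id;x,x_2,\dots,x_{r-1}))$, and in (g) excludes $\pi^j=I$ for every $2\le j\le r-1$ rather than only $j=r/q$. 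Your inversion-count argument for (e) supplies a justification the paper leaves implicit.
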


In the permutation matrix representation these can be achieved while the number of auxiliary variables remains $O(n^2)$, $O(n^3)$ or $O(n^4)$, depending on the constraint. The proof of the theorem and the more detailed comparison with the permutation matrix representation can be found in Appendix~\ref{appx-const-gt}.

\begin{corollary}
Permutations that are involutions, permutations commuting with a fixed $\pi$, permutations conjugate to a fixed $\pi$, even and odd permutations, permutations with a given order can be uniformly sampled with our construction.
\end{corollary}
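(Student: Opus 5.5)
The core tool will be the key/payload variant of the sorting network from Theorem~\ref{thm-network}. The comparison gate $\gt$ acts only on the key bits, while $\swap$ acts on all $k$ bits. This does not change the count: each gate still uses $O(k)$ auxiliary variables. I write $\nw((u;v),(u';v'))$ for such a network, with the output composite $(u';v')$. If the output keys are constrained to be $\id$, the network sorts by key, so it realises a permutation action on the payload. Concretely, $\nw((\pi;\id),(\id;w))$ forces $w=\pi^{-1}$: the element with key $\pi(i)=j$ lands in position $j$ carrying payload $i$. Likewise $\nw((\id;\pi'),(\pi;w))$, read through the composite convention, forces $w_j=\pi'(\pi^{-1}(j))$. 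Since the keys are distinct, the control bits are again unique, so each such gadget is uniform and uses $O(n\log^2 n)$ auxiliaries.

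\textbf{(a).} I compose two gadgets. The first computes the inverse $w=\pi^{-1}$. The second sorts $(w;\pi')$ or $(\pi';\pi)$, depending on the composition convention. Concretely, $\sigma(i)=\pi(\pi'(i))$ is obtained by sorting $(\pi';\,\cdot)$ to look up $\pi$ at $\pi'(i)$. To do this I first use $\nw((\pi';\id),(\id;u))$, giving $u=\pi'^{-1}$. I then use $\nw((u;\sigma),(\id;\pi))$: sorting by key $u_i$ puts at position $u_i$ the payload $\sigma_i$. So $\pi_{u_i}=\sigma_i$, i.e.\ $\sigma(i)=\pi(\pi'^{-1}(i))$. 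The sign of $\pi'$ can be fixed by swapping the roles, and I will settle the precise convention when writing it out.

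\textbf{(b)--(d) and (f).} Each of these is a constant number of multiplication gadgets plus equalities, where equality is just variable identification. For (b), require $\pi\pi=\id$. For (c), introduce $\sigma$ with both $\sigma=\pi\pi'$ and $\sigma=\pi'\pi$. For (d), require $\sigma\pi'=\pi\sigma$, with $\sigma$ auxiliary. Here uniformity is lost in general, because the number of valid $\sigma$ is the centralizer size; that is constant within a conjugacy class, so it is acceptable when $\pi$ is fixed. For (f), I chain $\pi_1=\pi$ and $\pi_{t+1}=\pi_t\pi$, and impose $\pi_r=\id$. This costs $O(r\,n\log^2 n)$, which is $O(n\log^2 n)$ for constant $r$. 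Alternatively I use repeated squaring to get $O(\log r)$ gadgets.

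\textbf{(e) and (g).} For parity (e), I use the control bits of a network sorting $\pi$ to $\id$. Each gate with $c_i=1$ applies a transposition, so the sign of $\pi$ is $(-1)^{\sum_i c_i}$. I add $\even(c)$ or $\odd(c)$ from (\ref{eq-odd}), which needs only $O(\log m)$ extra auxiliaries and stays uniform. For the order (g), I require $\pi^r=\id$ and, for each prime $p\mid r$, $\pi^{r/p}\neq \id$. The non-identity constraint is Theorem~\ref{thm-const-struct}(e). I expect this to be the main obstacle. The count is $\sum_{p\mid r}$ chains, which is $O(\log r)$ products with repeated squaring, so the bound holds for constant $r$. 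The hard part is ensuring that the inequality penalty keeps a unique auxiliary assignment, so that uniformity is preserved.
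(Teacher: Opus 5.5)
Your plan is the paper's proof: key/payload networks realise products, $\even(c)$ or $\odd(c)$ on the control bits of $\nw(x,\id,c)$ handles parity, and powers of $\pi$ with $\neq\id$ penalties handle the order. Your remark that in (d) the number of zero assignments is the centralizer size $|C(\pi)|$, constant once $\pi$ is fixed, is exactly why that case is uniform in the sense of the definition. Three points need repair.

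First, the gadget $\nw((\id;\pi'),(\pi;w))$ is wrong. A sorting network's output keys are ascending, so it forces $\pi=\id$ and $w=\pi'$. The fact you need is this: $\nw((a;b),(\id;c))$ forces $a$ to be a permutation and $b=c\circ a$, since element $i$ lands at position $a_i$, and its control bits are unique. Your intended gadget is therefore $\nw((\pi;\pi'),(\id;w))$.

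Second, your product in (a) computes $\pi\circ\pi'^{-1}$. Applied literally to (b), it gives the vacuous constraint $\pi\circ\pi^{-1}=\id$, and swapping the roles of $\pi,\pi'$ does not remove the inverse. Drop the inversion step instead. The single network $\nw((\pi';\sigma),(\id;\pi))$ forces $\sigma=\pi\circ\pi'$, which is the paper's gadget $\nw((x;y),(\id;x'))$.

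Third, what you call the main obstacle in (g) is not one. The string compared with the constant $\id$ is $\pi^{r/p}$, which is already determined by $\pi$. Its bitwise difference from a constant is linear in its bits ($x^b$ or $1-x^b$), so $\tresh_1$ stays quadratic with no extra quadratization. Its slack is forced to $Y=D-1$, where $D\ge 1$ is the Hamming distance to $\id$, and $Y$ has a unique binary encoding. Every auxiliary is thus determined and the zero count stays $1$; this is the uniformity already claimed for Theorem~\ref{thm-const-struct}(e).

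Beyond these, your (f) and (g) differ mildly from the paper. For (f), the paper checks $x_{j+1}=x_j\circ\pi$ for all $j$ at once with one network $\nw((x;x_2,\dots,x_{r-1},\id),(\id;x,x_2,\dots,x_{r-1}))$. It carries a wide payload and uses a single set of control bits, whereas you chain $r-1$ networks; the cost is $O(r\,n\log^2 n)$ either way. For (g), the paper forbids $\pi^j=\id$ for every $2\le j\le r-1$, while you forbid only $\pi^{r/p}=\id$ for primes $p\mid r$. Both characterise order exactly $r$. With your chain the powers $\pi^{r/p}$ already appear as intermediates, so your version needs fewer inequality penalties.
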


\section{Permutation pattern matching}

Permutation pattern matching~\cite{bose-ppm} is an NP-complete problem, that asks given a permutation $\pi$ of $\{1, \dots, n\}$ and $\sigma$ of $\{1, \dots, \ell\}$ with $\ell \leq n$ if $\pi$ has a subsequence that are in the same relative order as $\sigma$. If such a subsequence exists, we say that $\pi$ \emph{matches} $\sigma$, otherwise, $\pi$ \emph{avoids} $\sigma$.

Let $x\in\B^{nk}$ be variables, and let $y\in\B^{\ell k}$ be a fixed permutation of $\{1,\dots,\ell\}$ written on $k=\lfloor\log(n+1)\rfloor$ bits.

\begin{proposition}[Permutation patterns]
Given $x \in \B^{nk}$ and $y \in \B^{\ell k}$, with $\ell \leq n$, there is a quadratic polynomial representation for the relation $\match$, that contains pairs $(x,y)$, where the $x$ and $y$ are both permutations and $x$ matches $y$. This representation uses $O(n \log^2 n)$ variables.
\end{proposition}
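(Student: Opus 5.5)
We encode the choice of a subsequence of $\pi$ by $n$ selector bits $s \in \B^n$, where $s_i=1$ means that position $i$ belongs to the matched subsequence. The constraint $\hamm_\ell(s)$ from (\ref{eq-hamming}) forces exactly $\ell$ of them to be $1$. The task is then to check that the selected entries $x_i$, read in the order of their positions, have the same relative order as $y$. This is where the stable sorting network comes in.

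First we bring the selected entries to the front while keeping their positional order. We apply $\snw$ with the key $1-s_i$ and the payload $x_i$, i.e.\ to the composite input $((1-s); x, \id)$, and use the stable line-number tie-breaking. This moves the selected elements to the first $\ell$ lines, still ordered by their original positions. Call the payload on those lines $w_1,\dots,w_\ell$, so that $w$ is the chosen subsequence. The complemented bit is obtained by a linear substitution, so no extra variables are needed for it.

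Next we verify the pattern. The subsequence $w$ is order-isomorphic to $\sigma$ exactly when sorting $w$ gives the values $w_{\sigma^{-1}(1)} < \dots < w_{\sigma^{-1}(\ell)}$. Since $y$ is fixed, we can simply require
\[
\lt\bigl(w_{\sigma^{-1}(j)}, w_{\sigma^{-1}(j+1)}, 1\bigr), \qquad j = 1, \dots, \ell-1,
\]
using Corollary~\ref{cor-comp}. Each of these comparisons costs $O(k)$ auxiliary variables, so together they cost $O(\ell \log n)$. Alternatively, if $y$ were variable, we could sort the composite input $(w; y)$ with a network on $\ell$ lines and require the payload output to equal $\id$, which gives the same check. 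Finally we add $\perm(x)$ from Corollary~\ref{cor-perm}, and $\perm(y)$ if $y$ is treated as a variable. The total polynomial is a sum of non-negative quadratic terms and is zero exactly when $x$ is a permutation, $s$ selects $\ell$ positions, and the selected subsequence matches $y$.

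Counting variables: the stable network on $n$ lines carries keys of $O(1)$ bits and payloads of $O(\log n)$ bits, so with an AKS network it uses $O(n\log n)$ gates and $O(n \log^2 n)$ variables by Theorem~\ref{thm-network}. The remaining pieces contribute only $O(n\log n)$ variables.

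The main obstacle is getting stability right with a 1-bit key. The comparison has to act on the composite key (selector bit, line number) and the swap has to act on the full string, so we must check that the key/payload variant of the CE gate is still a valid representation. It is, because it is just $\gt$ applied to the key bits plus $\swap$ applied to all bits. Uniformity is not claimed, since several subsequences may match, which is consistent with the statement.
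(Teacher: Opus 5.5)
Your proof is correct. The extraction step matches the paper's: selector bits with a Hamming-weight penalty (the paper uses the complements $p_i=1-s_i$ with $H_{n-\ell}(p)$), then a stable key/payload sort that moves the selected $x_i$ to the first $\ell$ lines in positional order.

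You differ in the final order check. The paper runs two networks on $\ell$ lines with shared control bits, $\nw(w,z,c)+\nw(y,\id,c)$. This uses the fact that, for distinct inputs, the control bits of a sorting network encode the sorting permutation. It keeps $y$ an input of the polynomial and forces it to be a permutation, which fits reading $\match$ as a relation on pairs $(x,y)$. Your chain of $\ell-1$ gates $\lt(w_{\sigma^{-1}(j)},w_{\sigma^{-1}(j+1)},1)$ is more elementary and cheaper: $O(\ell\log n)$ extra variables instead of $O(\ell\log\ell\log n)$. However, the polynomial's structure then depends on $y$, so it only represents $\{x : x \text{ matches } y\}$ for each fixed $y$. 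Your variable-$y$ alternative removes this restriction at the same cost as the paper's coupled networks. It sorts $(w;y)$ with the payload output fixed to $\id$, which is the key/payload device from the proof of Theorem~\ref{thm-const-gt}. It already forces $y$ to be a permutation, so adding $\perm(y)$ is redundant.

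One notational slip: writing the input as $((1-s); x, \id)$ puts the line number in the payload. That does not give stability, because a general network such as AKS can move an element past an equal one through a comparator on non-adjacent lines. As your last paragraph says, the key must be $(1-s_i,i)$. So the key has $O(\log n)$ bits rather than $O(1)$, which does not change the $O(n\log^2 n)$ bound.
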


\begin{proof}
Enforce that the $x$ form a permutation by $\perm(x)$.

Denote the first $t$ variables of of $x$ by $x_{1..t} = (x_1, \dots, x_t)$. The constraint that $x_{1..\ell}$ matches $y$ is represented by the penalty
\begin{equation}
\label{eq-pattern-begin}
\nw(x_{1..\ell}, z, c) + \nw(y, \id, c)
\end{equation}
where the $z\in \B^{\ell k}$ are auxiliary variables and the $c \in \B^m$ are the control bits operating both sorting networks. This penalty is $0$ if and only if the $x_{1..\ell}$ are in the same relative order as the $y$. This constraint also enforces that the $y$ form a permutation.

We introduce bits $p=(p_1,\dots,p_n)\in\B^n$ to allow selecting any positions of $x$ to match $y$. Let $p_i=0$ if the position is selected and $p_i=1$ otherwise. To enforce that exactly $\ell$ positions are selected use the penalty $H_{n-\ell}(p)$. To move the selected $x_i$ to the first $\ell$ positions, while keeping them in the same relative order, we use a stable sorting network 
\begin{equation}
\label{eq-pattern-general}
\snw((p, \id; x), (q; w),
\end{equation}
where the $q \in \B^{n(k+1)}$ are auxiliary variables and the $w \in \B^{nk}$ contain the rearranged $x$.

Using (\ref{eq-pattern-begin}) on the $w$ results in the required constraint. This construction used four sorting networks with $O(k)$ bit width on each, which means this formulation uses $O(n \log^2 n)$ variables.
\end{proof}
Note that since multiple subsequences of $x$ may match $y$, this representation is not uniform.

\begin{credits}
\subsubsection{\ackname}
We thank András Bodor (HUN-REN Wigner RCP) for useful discussions.

Author Viktória Nemkin was supported by the Doctoral Excellence Fellowship Programme (DCEP), funded by the National Research, Development and Innovation Fund of the Ministry of Culture and Innovation and the Budapest University of Technology and Economics.

This paper was supported by the Ministry of Culture and Innovation and the National Research, Development and Innovation Office within the Quantum Information National Laboratory of Hungary (Grant No. 2022-2.1.1-NL-2022-00004).

\subsubsection{\discintname}
The authors have no competing interests to declare that are
relevant to the content of this article.
\end{credits}

\bibliographystyle{splncs04}
\bibliography{ce-qubo}

\newpage

\appendix
\section*{Appendix}

\section{Proof of Theorem \ref{thm-const-struct} and comparison with the permutation matrix representation}
\label{appx-const-struct}

\begin{proof}
(a) To enforce $\pi(i) = j$, replace the variable $x_i$ in the penalty $\perm(x)$ with the binary representation of the number $j$.

(b) To enforce $\pi(i) \neq j$, write $j$ in binary as $j^k, \dots, j^1$ and apply the penalty $\tresh_1$ from (\ref{eq-treshold}) on the $k$-bit number whose $b$th bit is
\begin{equation*}
x_i^b + j^b - j^b x^b_i.
\end{equation*}

(c) To enforce $i$ be a fixed point of $\pi$, i.e. $\pi(i) = i$, use (a).

(d) To enforce $\pi$ to be a derangement, i.e. have no fixed points, use (b) on all indices.

(e) To enforce $\pi \neq \tau$ for a fixed permutation $\tau$, write the corresponding $y_i$ in binary as $y_i^k, \dots, y_i^1$ and apply the penalty $\tresh_1$ from (\ref{eq-treshold}) on the $nk$-bit string whose $(i,b)$th bit is 
\begin{equation*}
x_i^b + y_i^b - y_i^b x_i^b.
\end{equation*}

All of these representations are uniform. Cases (a) and (c) require no additional auxiliary variables, while cases (b), (d) and (e) introduce auxiliaries to enforce the $T_1$ constraint, however these are negligible compared to the number of auxiliaries used in the construction of $\perm$.
\end{proof}

In the permutation matrix representation these constraints can be enforced in a similar fashion, by considering the rows of the matrix as unary representations of the $x_i$.

\section{Proof of Theorem \ref{thm-const-gt}  and comparison with the permutation matrix representation}
\label{appx-const-gt}

\begin{proof}
The following constraints can be enforced by adding further penalties to $\perm(x)$, while the number of auxiliary variables remains $O(n \log^2 n)$.

(a) To enforce $\sigma = \pi \pi'$, use the penalty
\begin{equation*}
\nw((x; y), (\id; x')) + \nw(x', \id).
\end{equation*}

(b) To enforce that $\pi$ is an involution, i.e., $\pi^2 = I$, which is a special case of (a), use the penalty
\begin{equation*}
\nw((x; \id), (\id; x)).
\end{equation*}

(c) To enforce that $\pi$ and $\pi'$ commute, i.e., $\pi' \pi = \pi \pi'$ use the penalty
\begin{equation*}
\nw((x; y), (\id; x')) + \nw((x'; y), (\id; x)).
\end{equation*}

(d) To enforce that $\pi'$ is a conjugate of $\pi$, i.e. $\pi' = \sigma^{-1} \pi \sigma$, use the penalty
\begin{equation*}
\nw((y, z), (\id; x)) + \nw((x', z), (\id; y)).
\end{equation*}

(e) To enforce the parity of $\pi$ to be even or odd, use the penalties
\begin{align*}
\nw(x, \id, c) + \even(c)\\
\nw(x, \id, c) + \odd(c).
\end{align*}

(f) To enforce $\pi^r = I$ for some constant $r \geq 3$ introduce variables $x_2, \dots, x_{r-1} \in \B^{nk}$ and use the penalty
\begin{equation*}
\nw((x; x_2, \dots, x_{r-1}, \id), (\id; x, x_2, \dots, x_{r-1})).
\end{equation*}

(g) To enforce that the order of $\pi$ is $r$, in addition to (f), use Theorem \ref{thm-const-struct} (e) with the identity permutation $\id$ on the $x_2, \dots, x_{r-1}$ variables.
\end{proof}

All of these representations are uniform and use $O(n \log^2 n)$ variables, since the dominant part of the formulation is the representation of $\nw$ which is used a constant number of times.

For the constraints corresponding to (a) in the permutation matrix representation, where permutations $\sigma, \pi, \pi'$ are encoded by matrices $S, P, P' \in \B^{n\times n}$ the matrix product $S = P' P$ is described by
\begin{equation*}
s_{i,j} = \sum_{k} p'_{i,k} p_{k,j} \quad (1 \leq i,j \leq n),
\end{equation*}
for which an additional penalty function similarly to (\ref{eq-hamming}) is
\begin{equation*}
\sum_{i=1}^n\sum_{j=1}^n (s_{i,j} - \sum_{k} p'_{i,k} p_{k,j})^2,
\end{equation*}
which can be quadratized by introducing auxiliary variables $z_{i,k,j} = p'_{i,k} p_{k,j}$, for all $1 \leq i,j,k \leq n$, for a total of $O(n^3)$ new variables.

The constraint in (b) can be enforced in the permutation matrix encoding by requiring the matrix to be symmetric, which introduces no additional variables. This approach, however, does not extend to (f) and (g), which generalize (b).

To enforce the constraints (c), (d), (f) and (g), repeated matrix multiplications can be done, similarly to (a), leading to a formulation with $O(n^3)$ new variables.

For the constraint (e) in permutation matrix encoding, for matrix $P \in B^{n\times n}$ the number of inversions is calculated by
\begin{equation*}
\sum_{i<j, k>\ell} x_{i,k} x_{j,\ell},
\end{equation*}
for which an additional penalty function similarly to (\ref{eq-even}) and (\ref{eq-odd}) can be introduced. To obtain a quadratic penalty, the sum above must be linearized, by introducing auxiliary variables for each $x_{i,k} x_{j,\ell}$ product, for all $1 \leq i,j,k,\ell \leq n$, for a total of $O(n^4)$ additional variables.

\end{document}